\documentclass[preprint,aps,prd,showpacs]{revtex4}
\setlength{\paperheight}{11in}
\parskip 0.2cm
\usepackage{amsfonts}
\usepackage{mathrsfs}
\usepackage{amsmath}
\usepackage{graphicx}
\usepackage{color}

\usepackage{amsmath,amsthm,amssymb,amsfonts}
\usepackage{mathtools}
\usepackage{bm} 
\usepackage{ulem}
\usepackage{upgreek} 
\usepackage{pifont} 
\usepackage{blkarray} 
\usepackage[all]{xy} 
\usepackage{enumitem} 
\usepackage{subdepth} 
\usepackage{fouridx} 
\usepackage{todonotes}
\usepackage{graphicx}
\usepackage[mode=image|tex]{standalone}
\usepackage{tikz}
\usepackage{calc}
\usetikzlibrary{calc}
\usetikzlibrary{positioning}
\usetikzlibrary{decorations.pathmorphing,decorations.markings}
\usetikzlibrary{arrows.meta}
\usetikzlibrary{backgrounds}
\usetikzlibrary{fit}

\usepackage{algorithm}
\usepackage{algpseudocode}

\usepackage[colorlinks,linkcolor=red,anchorcolor=blue,citecolor=green]{hyperref}
\usepackage[capitalise,noabbrev]{cleveref}

\newtheorem{theorem}{Theorem}

\newtheorem{corollary}[theorem]{Corollary}
\newtheorem{definition}[theorem]{Definition}
\newtheorem{example}[theorem]{Example}

\newtheorem{problem}[theorem]{Problem}
\newtheorem{remark}[theorem]{Remark}

\newtheorem{proposition}[theorem]{Proposition}

\newcommand{\bea}{\begin{eqnarray}}
\newcommand{\eea}{\end{eqnarray}}
\newcommand{\beq}{\begin{equation}}
\newcommand{\eeq}{\end{equation}}

\def\/{\over}
\def\be{\begin{equation}}
\def\ba{\begin{eqnarray}}
\def\ee{\end{equation}}
\def\ea{\end{eqnarray}}

\newcommand*\cplxconj[1]{\overline{#1}}

\newcommand{\injto}{\hookrightarrow}

\newcommand{\dd}{\,\mathrm{d}}

\DeclareMathOperator{\Tr}{Tr}

\DeclareMathOperator{\Hom}{Hom}

\DeclareMathOperator{\UG}{U}
\DeclareMathOperator{\Sp}{Sp}

\DeclareMathOperator{\SL}{SL}
\DeclareMathOperator{\Suz}{Suz}
\DeclareMathOperator{\PSL}{PSL}
\DeclareMathOperator{\MG}{M}
\DeclareMathOperator{\Rd}{Rd}
\DeclareMathOperator{\PSp}{PSp}
\DeclareMathOperator{\diag}{diag}

\DeclarePairedDelimiter{\set}{\{}{\}}
\DeclarePairedDelimiter{\abs}{\vert}{\vert}
\DeclarePairedDelimiter{\norm}{\Vert}{\Vert}

\DeclarePairedDelimiter{\lrbra}{[}{]}

\begin{document}
\title{\bf On the explicit constructions of certain unitary 
$t$-designs}
\author{Eiichi Bannai${}^{1}$~$\footnote{ bannai@math.kyushu-u.ac.jp}$}
\author{ Mikio Nakahara${}^{2,3}$~$\footnote{ nakahara@shu.edu.cn}$}
\author{ Da Zhao${}^{4}$~$\footnote{jasonzd@sjtu.edu.cn}$}
\author{ Yan Zhu${}^{2}$~$\footnote{ zhu\_yan@shu.edu.cn}$}
\affiliation{$^1$ Faculty of Mathematics, Kyushu University (emeritus), Japan\\
$^2$ Department of Mathematics, Shanghai University, Shanghai 200444, China\\
$^3$ Research Institute for Science and Technology, Kindai University, Higashi-Osaka, 577-8502, Japan\\
$^4$ School of Mathematical Sciences, Shanghai Jiao Tong University, Shanghai 200240, China
}


\vspace{3cm}

\begin{abstract}
	Unitary $t$-designs are ``good'' finite subsets of the unitary group $\UG(d)$ that approximate the whole unitary group $\UG(d)$ well. Unitary $t$-designs have been applied in
	randomized benchmarking, tomography, quantum cryptography and many other areas of
	quantum information science.
	If a unitary $t$-design itself is a group then it is called a unitary $t$-group. 
	Although it is known that unitary $t$-designs in $\UG(d)$ exist for any $t$ and $d$, the unitary $t$-groups do not exist for $t\geq 4$ if $d\geq 3$, as it is shown by Guralnick-Tiep (2005) and Bannai-Navarro-Rizo-Tiep (BNRT, 2018). 
	Explicit constructions of exact unitary $t$-designs in $\UG(d)$ are not easy in general. 
	In particular, explicit constructions of unitary $4$-designs in $\UG(4)$ have been an open problem in quantum information theory. 
	We prove that some exact unitary $(t+1)$-designs in the unitary group $\UG(d)$ are constructed from unitary $t$-groups in $\UG(d)$ that satisfy certain specific conditions. 
	Based on this result, we specifically construct exact unitary $3$-designs in $\UG(3)$ from the unitary $2$-group $\SL(3,2)$ in $\UG(3),$ and also unitary $4$-designs in $\UG(4)$ from the unitary $3$-group $\Sp(4,3)$ in $\UG(4)$ numerically. 
	We also discuss some related problems. 
\end{abstract}
\pacs{03.67.-a, 03.65.Fd, 02.20.Rt}
\maketitle

\section{Introduction}

\baselineskip=16pt

The basic idea of ``design theory'' is to approximate a given space $M$ by a good finite subset $X$ of $M$. 
The spherical $t$-designs are those finite subsets $X$ of the unit sphere $M=S^{n-1}$ such that for any polynomial $f$ of degree up to $t$, the spherical integral of $f$ on the sphere is given by the average value of $f$ at the finitely many points of $X$ of $S^{n-1}$ \cite{MR0485471}. 
So are the concept of combinatorial $t$-designs ($t$-$(v,k,\lambda)$ designs) of the $M={\binom{V}{k}}$, the set of all the $k$-element subsets of a set $V$ of cardinality $v$. 
The space $M$ has the structure of an association scheme called Johnson association scheme $J(v,k)$. 
This concept of $t$-design was generalized further to the concept of $t$-designs in $Q$-polynomial association schemes by Delsarte \cite{MR0384310}. 
There are many different kinds of design theories and there is 
vast literature on various design theories. 
We would like to refer the readers, in particular, to the following two papers \cite{MR2535394,MR3594369} for the review of the developments of design theory including many generalizations of the concept of $t$-designs, from viewpoint of algebraic combinatorics. 

The microscopic world is described by quantum physics, where the time-evolution of a closed system is expressed by a unitary transformation. Accordingly, study of unitary transformations, or unitary matrices if the system is finite-dimensional, is essential to understand the quantum world. 
Needless to say, unitary transformations play central
roles in quantum computing and quantum information theory.
So, it is natural for us to approximate the whole unitary group $\UG(d)$ by a finite subset $X$ of $M=\UG(d)$. 
This lead physicists and mathematicians to formulate the concept of unitary $t$-designs \cite{MR2326329,MR2433437}. 
A systematic study of unitary $t$-designs from a mathematical viewpoint is given by Roy-Scott \cite{MR2529619} and we use their paper as a basic reference on unitary $t$-designs. 
There are many further developments on the theory of unitary $t$-designs, including those so called approximate unitary $t$-designs.
Those unitary $t$-designs which satisfy \cref{eqn:biaxal} in \cref{def:magnum} in Section II is called exact unitary $t$-designs. Approximate unitary $t$-designs have also been considered and studied mainly in physics.

A unitary $t$-design $X$ of $\UG(d)$ is called a unitary $t$-group
if $X$ is a subgroup of $\UG(d)$ as well.
In physics, cf \cite{Zhu_2017,Webb:2016:CGF:3179439.3179447,1609.08172}, some unitary 3-groups have been known, say Clifford groups and some sporadic examples, but the difficulty of finding unitary 4-groups (except for the case of $d=2$, cf. \cite{1810.02507}) has been noticed. 
Actually, the non-existence of unitary 4-groups was known for $d\geq 5$ in a disguised form in finite group theory, in a very deep paper of Guralnick-Tiep \cite{MR2123127}) that uses the classification of finite simple groups. 
This was recently pointed out by BNRT \cite{1810.02507} and the complete classification of unitary $t$-groups on $\UG(d)$ for all $t\geq 2$ and $d\geq 2$ was obtained
therein.

Although unitary $4$-groups on $\UG(d)$ do not exist for $d\geq 3$ at all, unitary $t$-designs exist for all $t$ and $d$ as was proved in
Seymour-Zaslavsky \cite{MR744857}. 
However, the explicit constructions of unitary $t$-designs are challenging in general, similarly as in the case for the explicit constructions of spherical $t$-designs.  In particular, 
while the existence of unitary $4$-designs in $\UG(4)$ have been known, their explicit constructions were not obtained so far to our knowledge \cite{Nakata}. 
Explicit constructions of unitary $t$-designs are essential in many areas of quantum information processing such as efficient randomized benchmarking of quantum channels
\cite{PhysRevA.77.012307,PhysRevLett.106.180504,PhysRevA.85.042311,wallman2014randomized,1510.02767,1711.08098}, 
quantum process tomography\cite{MR2433437,2016APS..MARB44006L}, 
quantum state tomography\cite{PhysRevA.72.032325,scott2006tight,PhysRevA.84.022327,PhysRevA.90.012115}, decoupling
\cite{szehr2013decoupling,nakata2017decoupling},
quantum cryptography\cite{ambainis2009nonmalleable}
 and data hiding\cite{985948}, among others. 
Their efficient implementation in terms of the number of local gates have been actively studied \cite{Cleve:2016:NCE:3179473.3179474,MR2551028,MR3535891,PhysRevX.7.021006}.

The main purpose of this paper is to give explicit constructions of 
unitary $3$-designs in $\UG(3)$ and 
unitary $4$-designs in $\UG(4)$ numerically. 
In order to do that, we first obtain the following purely mathematical theorem that explains how we can construct unitary $(t+1)$-designs from certain unitary $t$-group $G$ explicitly. 
Namely, we obtain the following Theorem:

\begin{theorem} \label{thm:Khazar}
	Let ${G}$ be a finite subgroup of $\UG(d)$, and let $\chi : \UG(d) \to \UG(d)$ be the natural (fundamental) unitary representation of $\UG(d)$. 
	We abuse the notation by considering $\chi : G \hookrightarrow \UG(d)$ as the natural embedding of $G$.
	Suppose that ${G}$ is a unitary $t$-group in ${\UG(d)}$.
	Let $\chi^{t+1}$ be the $(t+1)$ times tensor product of the fundamental representation $\chi$. 
	Suppose
	$${(\chi^{t+1} , \chi^{t+1})_G=(\chi^{t+1}, \chi^{t+1} )_{\UG(d)}+1.}$$ 
	Then there exists a non-zero  $G \times G$-invariant homogeneous polynomial $f \in \Hom (\UG(d), t+1, t+1)$, unique up to scalar multiplication, such that $\int_{\UG(d)} f(U) \dd U = 0$. 
	Let ${U_0\in \UG(d)}$ be a zero of ${f.}$ 
	Then the orbit ${X} = GU_0G$ of ${U_0}$ under the action ${G\times G}$ on ${\UG(d)}$ becomes a unitary $(t+1)$-design in ${\UG(d)}$.
\end{theorem}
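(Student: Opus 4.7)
The plan is to leverage the twirl formulation of unitary designs. For $s\in\mathbb{N}$, put $\rho_s := \chi^s\otimes\overline{\chi^s}$, the representation of $\UG(d)$ on $\mathcal{W}_s := V^{\otimes s}\otimes\overline{V}^{\otimes s}$ with $V=\mathbb{C}^d$, and let $T^X_s := \frac{1}{|X|}\sum_{U\in X}\rho_s(U)$, $T^G_s$ analogously for a finite subgroup, and $T^{\UG(d)}_s := \int_{\UG(d)}\rho_s(U)\dd U$. Then a finite set $X\subset\UG(d)$ is a unitary $s$-design iff $T^X_s = T^{\UG(d)}_s$, and both sides are the orthogonal projector onto $\mathcal{W}_s^{\UG(d)}$. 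Since $(\chi^s,\chi^s)_G = \dim\mathcal{W}_s^G$ and similarly for $\UG(d)$, the hypothesis at $s=t+1$ says $\dim\mathcal{W}_{t+1}^G = \dim\mathcal{W}_{t+1}^{\UG(d)}+1$. Let $W\subset\mathcal{W}_{t+1}^G$ be the one-dimensional orthogonal complement of $\mathcal{W}_{t+1}^{\UG(d)}$ inside $\mathcal{W}_{t+1}^G$, with rank-one orthogonal projector $P_W$; then $T^G_{t+1} = T^{\UG(d)}_{t+1} + P_W$.

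Writing $\rho := \rho_{t+1}$, the representation property of $\rho$ together with $h\mapsto h^{-1}$ being a bijection of $G$ gives
\[
T^X_{t+1} \;=\; \frac{1}{|G|^2}\sum_{g,h\in G}\rho(gU_0 h^{-1}) \;=\; T^G_{t+1}\,\rho(U_0)\,T^G_{t+1}.
\]
Expanding via $T^G_{t+1}=T^{\UG(d)}_{t+1}+P_W$ and applying the Haar-invariance identities $T^{\UG(d)}_{t+1}\rho(U)=\rho(U)T^{\UG(d)}_{t+1}=T^{\UG(d)}_{t+1}$ for all $U\in\UG(d)$, together with $P_W T^{\UG(d)}_{t+1} = T^{\UG(d)}_{t+1}P_W = 0$ (orthogonality of $W$ and $\mathcal{W}_{t+1}^{\UG(d)}$), the two cross terms vanish and $T^X_{t+1} = T^{\UG(d)}_{t+1} + P_W\rho(U_0)P_W$. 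Because $P_W$ has rank one, $P_W\rho(U_0)P_W = f(U_0)\,P_W$ with $f(U_0) := \langle w,\rho(U_0)w\rangle$ for a unit vector $w$ spanning $W$. Consequently $X$ is a unitary $(t+1)$-design iff $f(U_0)=0$; the lower-order conditions $T^X_s = T^{\UG(d)}_s$ for $s\le t$ follow from the same identity together with the $t$-group property $T^G_s = T^{\UG(d)}_s$.

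Next, define $f(U) := \langle w,\rho(U)w\rangle$ globally. Since $\rho(U)$ has matrix entries of bidegree $(t+1,t+1)$ in $U$ and $\overline{U}$, one has $f\in\Hom(\UG(d),t+1,t+1)$. Bi-invariance $f(gUh)=f(U)$ for $g,h\in G$ follows from $\rho(g)w=\rho(h)w=w$ and unitarity of $\rho$; $\int_{\UG(d)}f\dd U = \langle w,T^{\UG(d)}_{t+1}w\rangle = 0$ since $w\perp\mathcal{W}_{t+1}^{\UG(d)}$; and $f(I)=\|w\|^2>0$ ensures $f\not\equiv 0$. For uniqueness up to scalar, I would invoke the Peter-Weyl identification $\Hom(\UG(d),t+1,t+1)\cong\bigoplus_\pi V_\pi\otimes V_\pi^*$, the sum running over the distinct $\UG(d)$-irreducibles $\pi$ appearing in $\mathcal{W}_{t+1}$. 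The hypothesis rewrites as $\sum_\pi n_\pi(\dim V_\pi^G - \delta_{\pi,\mathrm{triv}}) = 1$, with $n_\pi$ the multiplicity of $\pi$ in $\mathcal{W}_{t+1}$; non-negativity of each summand forces a unique non-trivial $\pi_0$ with $n_{\pi_0}=\dim V_{\pi_0}^G=1$ and $V_\pi^G=0$ for every other non-trivial $\pi$ in $\mathcal{W}_{t+1}$. Hence $\dim\Hom(\UG(d),t+1,t+1)^{G\times G} = 2$ (one dimension from the constants, one from $\pi_0$), and the Haar integral kills the $\pi_0$-component, yielding the required one-dimensional kernel.

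Combining these ingredients: for any zero $U_0$ of $f$, the twirl calculation gives $T^X_{t+1}=T^{\UG(d)}_{t+1}$, so $X=GU_0G$ is a unitary $(t+1)$-design. The main technical obstacle I anticipate is the Peter-Weyl / dimension-counting bookkeeping underpinning the uniqueness claim; the design conclusion itself is a direct consequence of the rank-one structure $T^G_{t+1}-T^{\UG(d)}_{t+1}=P_W$ combined with the Haar absorption $T^{\UG(d)}_{t+1}\rho(U_0)=T^{\UG(d)}_{t+1}$.
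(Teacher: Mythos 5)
Your proof is correct, and it reaches the conclusion by a genuinely different (and leaner) route than the paper. The paper decomposes $\chi^{t+1}\otimes\overline{\chi}^{t+1}$ into $\UG(d)$-irreducibles $\rho_\mu$, introduces an auxiliary notion of unitary $\rho$-design together with a positive-semidefiniteness trace inequality (their Theorem~6), and then handles the single exceptional irreducible $\rho_{\widetilde{\mu}}$ by a block-diagonal analysis of the averaged operator $M(U)$, showing that $\Tr M(U_0)=0$ forces $M(U_0)=0$. You instead work with the full twirl operator $T_{t+1}$ on $V^{\otimes(t+1)}\otimes\overline{V}^{\otimes(t+1)}$ and observe that the hypothesis says exactly that $T^G_{t+1}-T^{\UG(d)}_{t+1}$ is a rank-one projector $P_W$, whence $T^X_{t+1}-T^{\UG(d)}_{t+1}=f(U_0)P_W$ in one line; this makes the rank-one mechanism explicit and gives for free the converse that $GU_0G$ is a $(t+1)$-design \emph{only if} $f(U_0)=0$, which the paper does not state. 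Your uniqueness argument also differs usefully from the paper's Proposition~3: they compute the multiplicity $(t+1)!$ of the trivial $\UG(d)$-representation via Schur--Weyl duality under the side hypothesis $d\geq t+1$, whereas your Peter--Weyl count $\sum_\pi n_\pi(\dim V_\pi^G-\delta_{\pi,\mathrm{triv}})=1$ needs only non-negativity of each summand and so avoids both the explicit multiplicity and the restriction on $d$. The only points worth tightening are cosmetic: the identification of $\frac{1}{|G|^2}\sum_{g,h}\rho(gU_0h^{-1})$ with the average over the set $X=GU_0G$ deserves the one-sentence orbit--stabilizer remark that every point of the orbit is hit equally often (the paper elides this too), and the phrase ``both sides are the orthogonal projector'' should be restricted to the Haar side, since $T^X_s$ is a projector only when $X$ is in fact a design.
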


Here we defined the inner product of two representations $\rho_1$ and $\rho_2$ of a group $\UG(d)$ by $(\rho_1 , \rho_2)_{\UG(d)} = \int_{\UG(d)} \Tr \rho_1(U) \overline{\Tr \rho_2(U)}
	 \dd U$ and $(\rho_1 , \rho_2)_{G} = \frac{1}{|G|} \sum_{x \in G} \Tr \rho_1(x) \overline{\Tr \rho_2(x)}$ for a finite subgroup $G \subset \UG(d)$. The Haar measure is normalized
	 as $\int_{\UG(d)} \dd U = 1$.

	This theorem guarantees that if there is such $G$ satisfying the conditions of \cref{thm:Khazar}, then there is a non-trivial homogeneous polynomial $f$ in $\Hom (\UG(d), t+1,t+1)$ that is invariant under the action of $G\times G.$  
	Take any zero $U_0$ of $f$ on $\UG(d)$, then the orbit of $U_0$ 
	under the action of $G\times G$, say $X=GU_0G$, gives a unitary $(t+1)$-design on $\UG(d).$ 
	In \cref{sec:Ino}, we apply this Theorem in particular for the two cases
	\begin{enumerate}
		\item $d=3, G=\SL(3,2), t=2$ 
		\item $d=4, G=\Sp(4,3), t=3$
	\end{enumerate}
	to construct the explicit unitary $(t+1)$-designs in $\UG(d)$ numerically. 

	This technique also works for other $G$ satisfying the conditions of \cref{thm:Khazar}, but the large order of the group so far prevented us from getting the explicit examples for other cases. 
	They should be manageable if we have more computational resources. 

	\cref{thm:Khazar} claims $X$ is a unitary $(t+1)$-design, although it does not rule out the possibility that $X$ is also a unitary $(t+2)$-design. 
	We have the following theorem to bound the strength of the design.

\begin{theorem} \label{thm:lentiform}
	Let $G$ be a finite subgroup of $\UG(d)$. 
	Let $X_1 = G U_1 G$ and $X_2 = G U_2 G$ be two orbits of the natural action $G \times G$ on $\UG(d)$. 
	Suppose $X_i$ is a unitary $t_i$-design but not a unitary $(t_i+1)$-design where $i = 1,2$. 
	Then $t_1 \leq 2 t_2 +1$ and $t_2 \leq 2 t_1 + 1$.
\end{theorem}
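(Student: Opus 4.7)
My plan is to reduce both inequalities to the following implication:

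\emph{$(\ast)$ If $X_1$ is a unitary $2m$-design, then $X_2$ is a unitary $m$-design.}

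Granting $(\ast)$, applying it with $m = \lfloor t_1/2 \rfloor$ yields $t_2 \geq \lfloor t_1/2 \rfloor$, hence $t_1 \leq 2t_2 + 1$; exchanging the roles of $X_1$ and $X_2$ gives the symmetric inequality $t_2 \leq 2t_1 + 1$.

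The key object for proving $(\ast)$ is the polynomial
\[
F(U) \;:=\; \frac{1}{|X_2|} \sum_{V \in X_2} |\Tr(UV^*)|^{2m} \;\in\; \Hom(\UG(d), m, m).
\]
The crucial observation is that $F$ is bi-$G$-invariant: since $X_2 = GU_2G$ is stable under both left and right multiplication by $G$, a change-of-summation-variable argument shows $F(gUh) = F(U)$ for all $g, h \in G$. Consequently, $F$ is \emph{constant} on the $G \times G$-orbit $X_1 = GU_1G$, with value $F(U_1)$.

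Next I invoke the $2m$-design property of $X_1$ twice. Applied to $F \in \Hom(\UG(d),m,m)$, together with the constancy of $F$ on $X_1$, it yields
\[
F(U_1) \;=\; \frac{1}{|X_1|}\sum_{U \in X_1} F(U) \;=\; \int_{\UG(d)} F(U)\,\dd U \;=\; \int_{\UG(d)} |\Tr W|^{2m}\,\dd W \;=:\; c_m,
\]
using translation-invariance of the Haar measure in the last equality. Applied to $F^2 \in \Hom(\UG(d),2m,2m)$ it analogously yields $F(U_1)^2 = \int_{\UG(d)} F(U)^2\,\dd U$. Combining, $\int F^2\,\dd U = c_m^2 = (\int F\,\dd U)^2$, so the variance of $F$ on $\UG(d)$ vanishes and, by continuity, $F \equiv c_m$. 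Averaging $F(V) = c_m$ over $V \in X_2$ then gives $\frac{1}{|X_2|^2}\sum_{V,V' \in X_2}|\Tr(VV'^*)|^{2m} = c_m$, which by the Welch-type characterization of exact unitary designs (Roy--Scott) is equivalent to $X_2$ being a unitary $m$-design.

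The main obstacle is spotting the correct ``kernel'' $F$ and recognizing its bi-$G$-invariance --- this is the structural observation that transports design information across the two $G \times G$-orbits. Once it is in place, the rest of the argument is a Jensen-type variance calculation followed by the standard Welch-bound characterization of exact unitary designs.
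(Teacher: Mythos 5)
Your proof is correct, but it goes by a different route than the paper's. The paper argues by contradiction with an abstract witness: assuming (WLOG $t_1\leq t_2$) that $X_1$ is not a $(t_1+1)$-design, it picks a bi-$G$-invariant $h\in\Hom(\UG(d),t_1+1,t_1+1)$ with $\int h\,\dd U=0$ but $h(U_1)\neq 0$, forms $f=h\overline{h}-c$ with $c=\int|h|^2\,\dd U>0$, and notes that if $X_2$ were a $(2t_1+2)$-design then both $f(U_2)=0$ and $h(U_2)=0$ would have to hold (bi-invariant functions are constant on bi-orbits and their orbit average equals their Haar integral), forcing $-c=0$. You instead prove the contrapositive implication directly and constructively: you take as your test function the frame potential of $X_2$, $F(U)=\frac{1}{|X_2|}\sum_{V\in X_2}|\Tr(UV^*)|^{2m}$, observe it is bi-$G$-invariant hence constant on $X_1$, use the $2m$-design property of $X_1$ on $F$ and on $F^2$ to force $\int F^2\,\dd U=(\int F\,\dd U)^2$, conclude $F\equiv c_m$ by a variance argument, and finish with the Welch-bound (frame-potential) characterization of exact $m$-designs. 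Both arguments rest on the same two mechanisms --- bi-invariant polynomials are constant on $G\times G$-orbits, and squaring doubles the degree while producing something of positive mean --- so this is the same Sidelnikov/Bannai-style idea (cf.\ the spherical-design antecedent the paper cites), but your key lemma is the frame-potential criterion plus a variance computation, while the paper's is a shorter witness-based contradiction. Your version buys an explicit, quantifier-free chain of identities and isolates the clean statement $(\ast)$; the paper's buys brevity at the cost of leaving implicit the (easy, by $G\times G$-averaging) existence of a bi-invariant witness $h$. Two small points you should make explicit: applying the $2m$-design property of $X_1$ to $F\in\Hom(\UG(d),m,m)$ uses the inclusion $\Hom(\UG(d),m,m)\subseteq\Hom(\UG(d),2m,2m)$ coming from $UU^\dagger=I$ (the paper uses the same fact silently), and the Welch-bound equivalence you invoke is exactly what the paper's Theorem~\ref{thm:consortion} together with Theorem~\ref{thm:walkmiller} provides for $\rho=\chi^m\otimes\overline{\chi}^m$, so it is available within the paper's own toolkit.
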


	This theorem is motivated by \cite{MR740321} which proves a similar result for spherical designs.

	We will conclude our paper by giving some discussions. 

\section{\texorpdfstring{Unitary $t$-designs and unitary $t$-groups}{Unitary t-designs and unitary t-groups}}

Let us recall the definition of unitary $t$-designs in $\UG(d).$ 

\begin{definition} \label{def:magnum}
	A finite subset ${X}$ of the unitary group ${\UG(d)}$ is called a unitary ${t}$-design, if 
	\begin{equation} \label{eqn:biaxal} 
		 {\int_{\UG(d)}f(U) \dd U=\frac{1}{|X|}\sum_{U\in X}f(U)}
	\end{equation}
	for any ${f(U)\in \Hom(\UG(d), t,t).}$
	Here ${\Hom (\UG(d),r,s)}$ is the space of polynomials that are homogeneous of degree ${r}$ in the matrix entries of ${U}$, and homogeneous of degree ${s}$ in the matrix entries of the Hermitian conjugate ${U^{\dagger}}$ of $U$.
\end{definition}

	Those satisfying the condition (\ref{eqn:biaxal}) above are called exact unitary $t$-designs in some literature. 
	In this paper, we consider only these unitary $t$-designs. 
	While those with the condition (\ref{eqn:biaxal}) replaced by the condition that the difference of both sides is very small, are called approximate unitary $t$-designs. 
	Of course exact unitary $t$-designs are approximate unitary $t$-designs, and both types of unitary $t$-designs are studied extensively in physics \cite{MR2551028,MR3535891,PhysRevX.7.021006,dankert2009exact,harrow2009random}.
	
	It is known that there are many equivalent characterizations of unitary $t$-design in $\UG(d)$. (cf. Roy-Scott \cite{MR2529619}, Zhu-Kueng-Grassl-Gross 
	\cite{1609.08172}.) 
	Here, we will use some of the equivalent conditions later in our paper.  
	One equivalent definition is as follows \cite[p.14]{MR2529619}:

	A finite subset $X$ in $\UG(d)$ is a unitary $t$-design, if and only if for any $f\in \Hom (\UG(d), t,t)$, 
	$$(1,f)_{\UG(d)}=(1,f)_X,$$ 
	where $$(1,f)_{\UG(d)}=\int_{\UG(d)}\overline{f}\dd U$$ and 
	$$(1,f)_{X}=\frac{1}{|X|}\sum_{x\in X}\overline{f(x)}.$$ \\
	
	There are several different characterization of unitary $t$-groups \cite[Corollary 8]{MR2529619} and \cite[Proposition 3]{1609.08172}. 
	Let ${\chi}$ be the natural (fundamental) representation of ${G \hookrightarrow \UG(d)}$ as well as the natural representation of $\UG(d)$,  $\chi : \UG(d) \hookrightarrow \UG(d)$. 
	The notation $\chi^t$ is the shorthand for $t$ times tensor product $\chi \otimes \cdots \otimes \chi$. 
	\begin{enumerate}
		\item A finite subgroup ${G}$ is a unitary ${t}$-group in ${\UG(d)}$, if and only if
	$${\frac{1}{|G|}\sum_{g\in G}\abs{\Tr \chi(g)}^{2t}=\int_{U\in \UG(d)}
	\abs{\Tr \chi(U)}^{2t} \dd U}.$$
		\item A finite subgroup $G$ is a unitary $t$-group if and only if the decomposition of $\UG(d)^{\otimes t}$ into the irreducible representations of ${\UG(d)}$ is the same as the decomposition of $G^{\otimes t}$ into the irreducible representations of ${G}$ in the sense of both dimension and multiplicity.  
		\item A finite subgroup ${G\subset \UG(d)}$ is a unitary ${t}$-group, if and only if 
	$${M_{2t}(G, V)=M_{2t}(\UG(d),V).}$$ 
	where the LHS 
	$${M_{2t}(G, V) :=(\chi^t, \chi^t)_G=\frac{1}{|G|}\sum_{g\in G}{\chi^t}(g)\overline{{\chi^t}(g)} = \frac{1}{\abs{G}}\sum_{g\in G}\abs{\Tr \chi(g)}^{2t}}$$ 
	and the RHS ${M_{2t}(\UG(d),V)}$ is the corresponding inner product 
	$${M_{2t}(\UG(d),V) := (\chi^t, \chi^t)_{\UG(d)}=\int_{U\in \UG(d)} \abs{\Tr \chi(U)}^{2t} \dd U.}$$
	\end{enumerate}

	Let us recall that unitary $t$-groups in $\UG(d)$ are completely classified for all $t\geq 2$ and $d\geq 2.$ (Cf. Guralnick-Tiep \cite{MR2123127} and BNRT \cite{1810.02507}.) 
	The main 
	purpose of this paper is to prove \cref{thm:Khazar} given in Section~II and construct new unitary designs accordingly.

\section{Proofs}  

	It is known that the irreducible representations of $\UG(d)$ appearing in $\chi^{t+1} \otimes \overline{\chi}^{t+1}$ are parametrized by the non-increasing integer sequence $\mu =(\mu_1, \mu_2, \ldots , \mu_d).$ 
	The irreducible representation of $\UG(d)$ corresponding the sequence $\mu$ is denoted by $\rho_{\mu}.$ (cf. \cite{MR2529619}). 
	Let $\Phi $ be the set of $\mu$ with $\rho_{\mu}$ in the representation $\chi^{t+1} \otimes \overline{\chi}^{t+1}$. 
	Such $\mu$ is characterized by $\mu_+ = - \mu_- \leq t+1$. 
	Here, $\mu_+$ is the sum of all positive $\mu_i$'s  and $\mu_-$ is the sum of all negative $\mu_i$'s. 

	Let $G$ be a subgroup of $\UG(d)$, and let $\chi$ be the natural embedding of $G$ into $\UG(d)$. 
	Suppose that $d\geq t+1$. 
	Let $(\chi^{t+1} \overline{\chi}^{t+1}, 1)_G = (t+1)! +1$. 
	First we prove the following proposition:

\begin{proposition} \label{prop:Kalmarian}
	With the notation given above, there is a unique non-trivial irreducible representation $\rho_{\widetilde{\mu}}$ such that $(\rho_{\widetilde{\mu}}, 1)_G=1$, where $\widetilde{\mu} \in \Phi$.
\end{proposition}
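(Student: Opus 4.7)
The plan is to compute $(\chi^{t+1}\otimes\overline{\chi}^{t+1},1)$ in two ways, once as a $\UG(d)$-module and once after restriction to $G$, and compare. Write the $\UG(d)$-isotypic decomposition
\begin{equation*}
\chi^{t+1}\otimes\overline{\chi}^{t+1} \;\cong\; \bigoplus_{\mu\in\Phi} m_\mu\,\rho_\mu ,
\end{equation*}
where $m_\mu\geq 1$ precisely for $\mu\in\Phi$ by the definition of $\Phi$. The key input at the $\UG(d)$ level is the multiplicity of the trivial representation $\rho_0=\rho_{(0,\ldots,0)}$, namely
\begin{equation*}
m_0 \;=\; (\chi^{t+1}\overline{\chi}^{t+1},1)_{\UG(d)} \;=\; \int_{\UG(d)} \abs{\Tr U}^{2(t+1)}\,\dd U \;=\; (t+1)!,
\end{equation*}
which is a standard Schur--Weyl computation valid in the stable range $d\geq t+1$: the commutant of $\UG(d)$ on $\chi^{t+1}$ is the full group algebra of $S_{t+1}$, and summing the squared dimensions of its irreducibles yields $(t+1)!$.

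Next I would restrict the above decomposition to $G$ and pair with the trivial character of $G$. By linearity,
\begin{equation*}
(\chi^{t+1}\overline{\chi}^{t+1},1)_G \;=\; \sum_{\mu\in\Phi} m_\mu\,(\rho_\mu,1)_G ,
\end{equation*}
where $(\rho_\mu,1)_G$ denotes the multiplicity of the trivial $G$-representation inside $\rho_\mu$ restricted to $G$, a non-negative integer. Separating the $\mu=0$ contribution (for which $(\rho_0,1)_G=1$) and substituting the hypothesis $(\chi^{t+1}\overline{\chi}^{t+1},1)_G=(t+1)!+1$ together with $m_0=(t+1)!$ collapses the identity to
\begin{equation*}
\sum_{\mu\in\Phi\setminus\{0\}} m_\mu\,(\rho_\mu,1)_G \;=\; 1 .
\end{equation*}

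Since every $m_\mu$ with $\mu\in\Phi$ is a positive integer and every $(\rho_\mu,1)_G$ is a non-negative integer, a sum of such products equalling $1$ is only possible if exactly one term contributes $1$ and all the others vanish. This forces a unique non-trivial $\widetilde{\mu}\in\Phi$ with $m_{\widetilde{\mu}}=1$ and $(\rho_{\widetilde{\mu}},1)_G=1$, whereas $(\rho_\mu,1)_G=0$ for every other $\mu\in\Phi\setminus\{0\}$; this is precisely the statement to be proved. The only step beyond routine character bookkeeping is the identification $m_0=(t+1)!$, which requires the dimension assumption $d\geq t+1$ so that Schur--Weyl applies in its stable form; no further obstacle is anticipated.
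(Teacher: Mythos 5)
Your proposal is correct and follows essentially the same route as the paper: both arguments rest on the fact that the trivial representation occurs with multiplicity $(t+1)!$ in $\chi^{t+1}\otimes\overline{\chi}^{t+1}$ over $\UG(d)$ (via Schur--Weyl in the range $d\geq t+1$), compare this with the hypothesis $(\chi^{t+1}\overline{\chi}^{t+1},1)_G=(t+1)!+1$, and conclude by integrality and positivity that exactly one non-trivial $\rho_{\widetilde{\mu}}$ contributes a single trivial $G$-constituent. Your write-up merely makes the final counting step more explicit than the paper does.
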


\begin{proof}
	We write $\displaystyle \chi^{t+1} =\bigoplus_\lambda H_{\lambda}=\bigoplus_\lambda W_{\lambda}\otimes S_{\lambda}$ as given in \cite[p.12]{1609.08172}. Here $W_{\lambda}$ is the Weyl module carrying the irreducible representation of $\UG(d)$ associated with the partition $\lambda$ while $S_{\lambda}$ is the Specht module of which
	the symmetric group $S_{t+1}$ acts irreducibly. 
	
By our assumption $(\chi^{t+1} \overline{\chi}^{t+1},1)_G=\sum_{\lambda, \rho} d_{\lambda} d_{\tau}(W_{\lambda}
	\cplxconj{W_{\tau}}, 1)_G =(t+1)!+1$. Here $\lambda$ and $\tau$ are non-increasing partitions of $t+1$ into no more than $d$ parts and $d_{\lambda}$ is the degree of the Specht module $S_\lambda$. 
	Note that 
	$(\chi^{t+1} \overline{\chi}^{t+1},1)_{\UG(d)}=\sum_{\lambda, \tau} d_{\lambda} d_{\tau} (W_{\lambda}
	\cplxconj{W_{\tau}}, 1)_{\UG(d)} = (t+1)! =\sum{d_{\lambda}}^2$. 
	
On the other hand, the irreducible representations $\rho_{\mu}$ of $\UG(d)$ appearing in $\chi^{t+1} \overline{\chi}^{t+1}$ are characterized in \cite[Theorem 4]{MR2529619}.  
We do not know the exact multiplicities in which each irreducible representation $\rho_{\mu}$ (of $\UG(d)$) appearing in $\chi^{t+1} \overline{\chi}^{t+1}$. 
However, we know that the multiplicity of $\rho_{(0,...,0)}$ is $(t+1)!$.   
	Since $(\rho_{\mu},1)_{\UG(d)}= 0$ for $\mu\not={(0,...,0)}$, we conclude that there is exactly one $\widetilde{\mu} \neq {(0,...,0)}$ such that $(\rho_{\tilde{\mu}},1)_G=1$. 
\end{proof}

Next we introduce the concept of unitary $\rho$-design for later proof.

\begin{definition}
	Let $(\rho,V)$ be a unitary representation of $\UG(d)$. 
	Let $X$ be a finite subset of $\UG(d)$. 
	Then $X$ is called a unitary $\rho$-design if 
	\begin{equation} \label{eqn:unshavenness}
		\frac{1}{\abs{X}} \sum_{U \in X} \rho(U) = \int_{\UG(d)} \rho(U) \,\mathrm{d}U.
	\end{equation}
\end{definition}

Obviously we have another characterization of unitary $t$-design.

\begin{theorem} \label{thm:walkmiller}
	$X$ is a unitary $t$-design if and only if $X$ is a unitary $\rho$-design for every irreducible representation $\rho$ appearing in $U^{\otimes t} \otimes (U^\dagger)^{\otimes t}$.
\end{theorem}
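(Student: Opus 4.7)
The plan is to identify $\Hom(\UG(d),t,t)$ as the linear span of the matrix entries of the representation $U^{\otimes t} \otimes (U^\dagger)^{\otimes t}$, and then translate between the scalar averaging identity \eqref{eqn:biaxal} and the matrix-valued averaging identity \eqref{eqn:unshavenness} by decomposing this representation into irreducibles.

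First I would observe that, by definition, every $f \in \Hom(\UG(d),t,t)$ is a linear combination of monomials of the form $U_{i_1 j_1}\cdots U_{i_t j_t}\,\overline{U_{k_1 l_1}}\cdots\overline{U_{k_t l_t}}$, and each such monomial is precisely an entry of the matrix $U^{\otimes t} \otimes (U^\dagger)^{\otimes t}$ in the standard tensor basis. Hence $\Hom(\UG(d),t,t)$ coincides with the span of the matrix coefficients of this representation. Writing $U^{\otimes t} \otimes (U^\dagger)^{\otimes t} \cong \bigoplus_{\rho} m_\rho\,\rho$ with the sum ranging over the irreducible representations $\rho$ appearing in the tensor representation, a unitary change of basis converts each entry of $U^{\otimes t} \otimes (U^\dagger)^{\otimes t}$ into a linear combination of entries of the irreducible blocks $\rho(U)$, and vice versa. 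Thus $\Hom(\UG(d),t,t)$ is also spanned by the entries of the $\rho(U)$ as $\rho$ ranges over these components.

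For the ``only if'' direction, assume $X$ is a unitary $t$-design. For each irreducible component $\rho$ of $U^{\otimes t} \otimes (U^\dagger)^{\otimes t}$, every matrix entry $\rho_{ij}(U)$ lies in $\Hom(\UG(d),t,t)$, so applying \eqref{eqn:biaxal} to each $\rho_{ij}$ yields the entrywise equality
\begin{equation*}
\frac{1}{\abs{X}}\sum_{U \in X} \rho_{ij}(U) = \int_{\UG(d)} \rho_{ij}(U)\,\mathrm{d}U,
\end{equation*}
which is exactly \eqref{eqn:unshavenness}. Conversely, if $X$ is a unitary $\rho$-design for every irreducible $\rho$ appearing in $U^{\otimes t} \otimes (U^\dagger)^{\otimes t}$, then \eqref{eqn:unshavenness} holds entrywise for each such $\rho$. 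Because any $f \in \Hom(\UG(d),t,t)$ is a linear combination of entries $\rho_{ij}(U)$ by the identification above, linearity of both sides of \eqref{eqn:biaxal} in $f$ gives \eqref{eqn:biaxal} for all such $f$, so $X$ is a unitary $t$-design.

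The only real bookkeeping step is the identification of $\Hom(\UG(d),t,t)$ with the matrix coefficients of $U^{\otimes t} \otimes (U^\dagger)^{\otimes t}$ and with the coefficients of its irreducible components under a suitable unitary change of basis; once this is in hand, both directions follow immediately from the linearity of the averaging operation. A minor sanity check is the trivial representation: its contribution to \eqref{eqn:unshavenness} is tautological, while by Schur orthogonality $\int_{\UG(d)} \rho(U)\,\mathrm{d}U = 0$ for every non-trivial irreducible $\rho$, so the real content of \eqref{eqn:unshavenness} is the vanishing of $\sum_{U \in X}\rho(U)$ for each non-trivial $\rho$ in the decomposition.
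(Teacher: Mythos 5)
Your proof is correct. The paper offers no argument for \cref{thm:walkmiller} at all (it is introduced with ``Obviously we have another characterization''), and your write-up supplies exactly the intended justification: the identification of $\Hom(\UG(d),t,t)$ with the span of the matrix coefficients of $U^{\otimes t}\otimes (U^\dagger)^{\otimes t}$ and of its irreducible blocks --- the same fact the paper invokes from Roy--Scott in its proof of \cref{thm:Khazar} --- followed by linearity of the two averaging operations.
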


We mimic the proof in \cite[Theorem 5.4]{MR2433437} to get an equivalent definition of unitary $\rho$-design, whose condition is easy to confirm.

\begin{theorem} \label{thm:consortion}
	For any finite $X \subset \UG(d)$, 
	\begin{equation} \label{eqn:fillemot}
		\frac{1}{\abs{X}^2} \sum_{U,V \in X} \Tr\rho(U^\dagger V) \geq \int_{\UG(d)} \Tr\rho(U) \,\mathrm{d}U.
	\end{equation}
	with equality if and only if $X$ is a unitary $\rho$-design.
\end{theorem}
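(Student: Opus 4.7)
The plan is to recast both sides of \cref{eqn:fillemot} as squared Frobenius norms and then obtain the inequality from the nonnegativity of a single squared norm. Set
\[M := \frac{1}{\abs{X}} \sum_{U \in X} \rho(U), \qquad N := \int_{\UG(d)} \rho(U) \dd U,\]
both viewed as endomorphisms of the representation space $V$ of $\rho$.

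First, I would use unitarity of $\rho$ to rewrite $\Tr \rho(U^\dagger V) = \Tr(\rho(U)^\dagger \rho(V))$, so that the left-hand side of \cref{eqn:fillemot} becomes $\Tr(M^\dagger M) = \|M\|_F^2$. For the right-hand side, a standard Haar-averaging argument shows that $N$ is the orthogonal projection onto the $\UG(d)$-invariants of $V$: left- and right-invariance of the Haar measure give $\rho(U) N = N \rho(U) = N$ for every $U \in \UG(d)$, whence $N^2 = N$ and $N^\dagger = N$ are immediate, and so $\Tr N = \Tr(N^\dagger N) = \|N\|_F^2$. The claim thus reduces to $\|M\|_F^2 \geq \|N\|_F^2$.

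Next, I would expand
\[0 \leq \|M - N\|_F^2 = \Tr(M^\dagger M) - \Tr(M^\dagger N) - \Tr(N^\dagger M) + \Tr(N^\dagger N).\]
Taking the adjoint of $N \rho(U) = N$ yields $\rho(U)^\dagger N = N$ for every $U$, and averaging over $U \in X$ gives $M^\dagger N = N$; analogously $N^\dagger M = NM = N$. Hence $\Tr(M^\dagger N) = \Tr(N^\dagger M) = \Tr N = \|N\|_F^2$, and substitution produces $0 \leq \|M\|_F^2 - \|N\|_F^2$. Equality holds precisely when $\|M - N\|_F^2 = 0$, i.e.\ when $M = N$, which is exactly condition \cref{eqn:unshavenness} defining a unitary $\rho$-design.

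No serious obstacle is anticipated; once both sides are identified as Frobenius norms the proof is essentially a one-line completion of squares. The only step requiring care is the invariance identity $\rho(U) N = N \rho(U) = N$ and the consequent identification of $N$ with an orthogonal projector onto $V^{\UG(d)}$, both of which follow routinely from Haar invariance.
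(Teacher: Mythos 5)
Your proof is correct and is essentially the same as the paper's: the paper also sets $S = \frac{1}{\abs{X}}\sum_{U\in X}\rho(U) - \int_{\UG(d)}\rho(U)\dd U$ and expands $0 \le \Tr(S^\dagger S)$, evaluating the cross terms via Haar invariance just as you do via the projector identity $M^\dagger N = N$. The equality case, $S=0$ being precisely the design condition, is likewise identical.
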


\begin{corollary} \label{coro:benevolent}
	If $X$ is a unitary $\rho$-design, then $UX$ is also a unitary $\rho$-design for every $U \in \UG(d)$.
\end{corollary}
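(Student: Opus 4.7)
The plan is to verify the defining identity \cref{eqn:unshavenness} for $UX$ directly, by pulling the factor $\rho(U)$ out of the finite sum and then invoking the left-invariance of the Haar measure on $\UG(d)$. The statement is essentially the finite-sum reflection of the fact that the operator $P := \int_{\UG(d)} \rho(V) \dd V$ is a $\UG(d)$-equivariant projector (indeed the orthogonal projector onto the $\UG(d)$-invariants of $\rho$).

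First I would note that left multiplication by $U \in \UG(d)$ is a bijection on $\UG(d)$, so $\abs{UX} = \abs{X}$ and the elements of $UX$ are exactly the products $UV$ for $V \in X$, each appearing once. Using that $\rho$ is a homomorphism together with the hypothesis that $X$ is a unitary $\rho$-design,
\[
\frac{1}{\abs{UX}} \sum_{W \in UX} \rho(W) \;=\; \rho(U) \cdot \frac{1}{\abs{X}} \sum_{V \in X} \rho(V) \;=\; \rho(U) \int_{\UG(d)} \rho(V) \dd V.
\]

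Second, I would verify that $\rho(U) P = P$. By left-invariance of the Haar measure, substituting $V' = UV$,
\[
\rho(U) \int_{\UG(d)} \rho(V) \dd V \;=\; \int_{\UG(d)} \rho(UV) \dd V \;=\; \int_{\UG(d)} \rho(V') \dd V' \;=\; P.
\]
Combining the two displays yields \cref{eqn:unshavenness} for $UX$, which is the desired conclusion.

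I do not anticipate any genuine obstacle: both the algebraic manipulation (factoring $\rho(U)$ out of a finite sum) and the analytic one (translation-invariance of Haar measure) are completely standard. An alternative, perhaps slicker route would go through \cref{thm:consortion}: for $UX$ the defining double sum satisfies $\sum_{V,V' \in X} \Tr \rho((UV)^\dagger (UV')) = \sum_{V,V' \in X} \Tr \rho(V^\dagger V')$ because $U^\dagger U = I$, so the inequality \cref{eqn:fillemot} is saturated on $UX$ precisely when it is saturated on $X$. Either route takes only a few lines.
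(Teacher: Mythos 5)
Your proposal is correct. The paper itself gives no written proof of \cref{coro:benevolent}; it is positioned as an immediate consequence of \cref{thm:consortion}, and the intended one-line argument is exactly your ``alternative route'': the double sum in \cref{eqn:fillemot} is unchanged when $X$ is replaced by $UX$ because $(UV)^\dagger(UV') = V^\dagger U^\dagger U V' = V^\dagger V'$, so equality is attained for $UX$ precisely when it is attained for $X$. Your primary argument --- factoring $\rho(U)$ out of the finite average and using left-invariance of the Haar measure to show $\rho(U)\int_{\UG(d)}\rho(V)\dd V = \int_{\UG(d)}\rho(V)\dd V$ --- is a genuinely different, and equally valid, route: it works directly from the definition \cref{eqn:unshavenness} without invoking \cref{thm:consortion} at all, at the cost of needing the (standard) translation-invariance of Haar measure. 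The route through \cref{thm:consortion} is what makes the statement literally a corollary in the paper's logical structure, and it also localizes the verification to a single scalar quantity rather than an operator identity; but since you supply both arguments, nothing is missing.
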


\begin{proof}[Proof of \cref{thm:consortion}]
	Let $S := \frac{1}{\abs{X}} \sum_{U \in X} \rho(U) - \int_{\UG(d)} \rho(U) \,\mathrm{d}U$. 
	Then
	\begin{align*}
		0 \leq \Tr(S^\dagger S) &= \frac{1}{\abs{X}^2} \sum_{U,V \in X} \Tr\rho(U^\dagger V) - 2 \times \frac{1}{\abs{X}} \sum_{U \in X} \int_{\UG(d)} \Tr\rho(U^\dagger V) \,\mathrm{d}V \\
		&+ \int_{\UG(d)} \int_{\UG(d)} \Tr\rho(U^\dagger V) \,\mathrm{d}U \,\mathrm{d}V \\
		&= \frac{1}{\abs{X}^2} \sum_{U,V \in X} \Tr\rho(U^\dagger V) - \int_{\UG(d)} \Tr\rho(U) \,\mathrm{d}U. \qedhere
	\end{align*}
\end{proof}

Now we are able to prove \cref{thm:Khazar}.

\begin{proof}[Proof of \cref{thm:Khazar}]
	First we recall the fact that all the matrix coefficient functions of $\rho_\mu$ where $\mu \in \Phi$ form a basis of $\Hom(\UG(d), t+1,t+1)$ \cite[Theorem 7]{MR2529619}. 
	By \cref{prop:Kalmarian}, there is a unique non-trivial irreducible representation $\rho_{\widetilde{\mu}}$ such that $(\rho_{\widetilde{\mu}},1)_G = 1$. 
	By symmetry $\widetilde{\mu} = - \widetilde{\mu}$. Here $- \widetilde{\mu}$ is obtained by negating entries of $\widetilde{\mu}$ and put them in reverse order.

	For every non-trivial $\mu \neq \widetilde{\mu}$, we have shown that $(\rho_\mu,1)_{G} = 0$. 
	Therefore $\frac{1}{\abs{G}^2} \sum_{U,V \in G} \Tr\rho_\mu(U^\dagger V) = 0$. 
	By \cref{thm:consortion}, $G$ is a unitary $\rho_\mu$-design. 
	It implies that $\frac{1}{\abs{G}} \sum_{U \in G} \rho_\mu(U) = 0$. 
	Hence every matrix coefficient function of $\rho_\mu$ becomes $0$ after $G \times G$ averaging.

	For $\mu = \widetilde{\mu}$, let us consider its matrix coefficient functions. 
	Let $A = \frac{1}{\abs{G}} \sum_{U \in G} \rho_{\widetilde{\mu}}(U)$ and $M = M(U) = \frac{1}{\abs{G}^2} \sum_{g_1,g_2 \in G} \rho_{\widetilde{\mu}}(g_1^\dagger U g_2)$. 
	Note that $\rho_{\widetilde{\mu}}(g) M = M \rho_{\widetilde{\mu}}(g) =M$ for every $g \in G$ and hence $AM=MA=M$. 
	Suppose $\rho_{\widetilde{\mu}}$ decomposes into irreducible representations $(\rho_\eta, V_\eta)$ of $G$ by $\rho_{\widetilde{\mu}} = \bigoplus_{\eta \in \Gamma} m_\eta \rho_\eta$. 
	Then $M$ is a block diagonal matrix with blocks corresponding to these $\rho_\eta$.
	By \cref{prop:Kalmarian}, one of the $\rho_\eta$'s is the trivial representation and its multiplicity is one. 
	For every other $\eta$, since $(\rho_\eta,1)_G = 0$, we have $A|_{V_\eta} = 0_{V_\eta}$. 
	Therefore $M|_{V_\eta} = 0_{V_\eta}$. 
	Hence the matrix coefficient functions in the block corresponding to $\rho_\eta$ becomes $0$ as well after $G \times G$ averaging.
	Note that the trivial representation in $\rho_{\widetilde{\mu}}$ is of dimension 1 and of multiplicity 1, therefore besides the trivial constant function only one matrix coefficient is non-zero after $G \times G$ averaging. 
	In fact, its $G \times G$ averaging is equal to the polynomial $f(U) = \frac{1}{\abs{G}^2} \sum_{g_1, g_2 \in G} \chi_{\widetilde{\mu}}(g_1^\dagger U g_2)$.
	Note that $(\rho_{\widetilde{\mu}},1)_{\UG(d)} = 0$ implies that $\int_{\UG(d)} f(U) \dd U = 0$.
	So far, we have shown the existence and uniqueness of the non-zero  $G \times G$-invariant homogeneous polynomial $f \in \Hom (\UG(d), t+1, t+1)$ such that $\int_{\UG(d)} f(U) \dd U = 0$.

	Now we take a zero $U_0$ of the polynomial $f(U) = \frac{1}{\abs{G}^2} \sum_{g_1, g_2 \in G} \chi_{\widetilde{\mu}}(g_1^\dagger Ug_2)$. 
	Let $X$ be the orbit of $U_0$ under the action of $G \times G$ on $\UG(d)$. 
	For every non-trivial $\mu \neq \widetilde{\mu}$, we have shown that $G$ is a unitary $\rho_\mu$-design. 
	By \cref{coro:benevolent} and the additivity of unitary $\rho$-design, $X = GU_0G$ is a unitary $\rho_\mu$-design. 
	For $\mu = \widetilde{\mu}$, since $U_0$ is a zero of $f$, we get $\Tr M(U_0) = 0$. 
	Combined with the argument in the last paragraph, $M(U_0)$ is indeed the zero matrix. 
	Hence $X$ is a unitary $\rho_{\widetilde{\mu}}$-design.

	Finally by \cref{thm:walkmiller}, we conclude that $X$ is a unitary $(t+1)$-design.
\end{proof}

\begin{proof}[Proof of \cref{thm:lentiform}]
	Without loss of generality, let us assume that $t_1 \leq t_2$. 
	Since $X_1 = G U_1 G$ is not a unitary $(t_1 + 1)$-design, there exists a $G \times G$-invariant homogeneous polynomial $h \in \Hom(\UG(d),t_1+1,t_1+1)$ such that $h(U_1) \neq 0$ and $\int_{\UG(d)} h(U) \dd U = 0$. 
	Now let us consider the $G \times G$-invariant homogeneous polynomial $h\overline{h} \in \Hom(\UG(d), 2t_1 + 2, 2t_1 +2)$. 
	Note that $c := \int_{\UG(d)} (h\overline{h})(U) \dd U > 0$. 
	Let $f := h \overline{h} - c \in \Hom(\UG(d), 2t_1 + 2, 2t_1 +2)$, then $\int_{\UG(d)} f(U) \dd U = 0$. 
	Suppose $X_2$ is a unitary $(2t_1 +2)$-design, then we must have $f(U_2) = 0$. 
	Note that $2t_1 + 2 > t_1 +1$, so $h(U_2) = 0$.
	Therefore $f(U_2) = h(U_2)\overline{h}(U_2) - c = -c < 0$, contradiction. 
	Hence $t_2 \leq 2t_1 + 1$.
\end{proof}

\section{\texorpdfstring{Examples of unitary $t$-groups $G$ in $\UG(d)$ satisfying the conditions of Theorem 1}{Examples of unitary t-groups G in U(d) satisfying the conditions of Theorem 1}}

The followings are some examples of ${G\subset \UG(d)}$ that satisfy the conditions in \cref{thm:Khazar}. Here, we basically use the notation of An Atlas of Finite Groups \cite{MR827219}. Also, see Guralnick-Tiep \cite{MR2123127} 
and BNRT \cite{1810.02507}. 

\begin{enumerate}
	\item For ${t=3,}$ (We assume ${d\geq 3.}$)
	\begin{enumerate}
		\item ${d=4, G=\Sp(4,3),}$
		\item ${d=6, G=6_1.\UG_4(3),}$
		\item ${d=12, G=6\Suz .}$
	\end{enumerate}
	\item For ${t=2,}$ (We assume ${d\geq 3.}$)
	\begin{enumerate}
		\item ${d=3, G=\SL(3,2)=\PSL(2,7),}$
		\item ${d=10, G=\MG_{22},}$
		\item ${d=28, G=\Rd ,}$
		\item ${d=(3^m\pm 1)/2, G=\PSp(2m,3), \Sp(2m,3)}$. 
		(see \cite[Section 4]{1810.02507} for the details of Weil representations in this case.) 
	\end{enumerate}
\end{enumerate}

The above list might exhaust all such examples, although we will not try to give a rigorous proof of this claim.

\section{Computation} \label{sec:Ino}

\subsection{The unitary representation of \texorpdfstring{$\SL(3,2)$}{SL(3,2)} and \texorpdfstring{$\Sp(4,3)$}{Sp(4,3)}}
	We aim to construct some unitary $(t+1)$-designs based on certain unitary $t$-groups. 
	This urges us to find the unitary representations of these groups first.

	We adopt the notation $E(n)$ being the $n$-th root of unity from the mathematical software GAP \cite{GAP4}. 
	The following two constructions are taken from \cite[Equation 10.1 and Equation 10.5]{MR0059914}.

	\begin{example} \label{eg:floatboard}
		Let $a := -(E(7)^4+E(7)^2+E(7))$.
		Let $\mathcal{M}$ be the matrix group generated by the following three matrices. 
		$$
		M_1 = \begin{bmatrix}
		 1 &  &  \\
		  &  & 1 \\
		  & 1 & 
		\end{bmatrix},
		\ M_2 = \begin{bmatrix}
		 1 &  &  \\
		  & 1 &  \\
		  &  & -1
		\end{bmatrix},
		\ M_3 = \begin{bmatrix}
		 1/2 & -1/2 & -a/2 \\
		 -1/2 & 1/2 & -a/2 \\
		 -\cplxconj{a}/2 & -\cplxconj{a}/2 & 0
		\end{bmatrix}.
		$$
		Then $G=\mathcal{M}^{(1)}$, the commutator subgroup of $\mathcal{M}$, is isomorphic to $\SL(3,2)$ and is embedded in $\UG(3)$. 
	\end{example}
	
	\begin{example} \label{eg:zoon}
		Let $\omega := E(3)$. 
		Let $\mathcal{M}$ be the matrix group generated by the following four matrices. 
		\begin{align*}
		M_1 = \begin{bmatrix}
		 1 &  &  &  \\
		  & 1 &  &  \\
		  &  & \omega^2 &  \\
		  &  &  & 1
		\end{bmatrix}, & 
		\ M_2 = \frac{-i}{\sqrt{3}} \begin{bmatrix}
		 \omega^{\phantom{1}} & \omega^2 & \omega^2 & 0 \\
		 \omega^2 & \omega^{\phantom{1}} & \omega^2 & 0 \\
		 \omega^2 & \omega^2 & \omega^{\phantom{1}} & 0 \\
		 0 & 0 & 0 & i\sqrt{3}
		\end{bmatrix}, \\
		M_3 = \begin{bmatrix}
		 1 &  &  &  \\
		  & \omega^2 &  &  \\
		  &  & 1 &  \\
		  &  &  & 1
		\end{bmatrix}, & 
		\ M_4 = \frac{-i}{\sqrt{3}} \begin{bmatrix}
		 \phantom{-}\omega^{\phantom{1}} & -\omega^2 & 0 & -\omega^2 \\
		 -\omega^2 & \phantom{-}\omega^{\phantom{1}} & 0 & \phantom{-}\omega^2 \\
		 0 & 0 & i\sqrt{3} & 0 \\
		 -\omega^2 & \phantom{-}\omega^2 & 0 & \phantom{-}\omega^{\phantom{1}} 
		\end{bmatrix}.
		\end{align*}
		Then $G=\mathcal{M}^{(1)}$, the commutator subgroup of $\mathcal{M}$, is isomorphic to $\Sp(4,3)$ and is embedded in $\UG(4)$.
	\end{example}

\subsection{The \texorpdfstring{$G \times G$}{GxG}-invariant polynomial}
	The construction of the $G \times G$-invariant polynomial $f$ in $\Hom(\UG(d),t+1,t+1)$ is based on the irreducible characters of $\UG(d)$. 

	Suppose $\chi_\mu$ is the character of an irreducible representation $(\rho_\mu, V_\mu)$ of the unitary group $\UG(d)$. 
	It naturally induces a $G \times G$-invariant function on $\UG(d)$, namely
	\begin{equation} \label{eqn:trifledom}
		f(U) = \sum_{(g_1,g_2) \in G \times G} \chi_\mu (g_1^\dagger U g_2).
	\end{equation}

	A closed form of $\chi_\mu(\Lambda)$ can be expressed as a symmetric polynomial with respect to the spectrum of the unitary matrix $\Lambda$. 
	Note that if $\mu = -\mu$, then $\chi_\mu$, thus $f$, is a real function.

	\begin{theorem}[{\cite[Theorem 8]{MR2529619} or \cite[Theorem 38.2 and Proposition 38.2]{MR2062813}}] \label{thm:inhibitor}
		Let $V_\mu$ be the irreducible representation of $\UG(d)$ indexed by non-increasing integer sequence $\mu$. 

		If $\mu_d = 0$, then the character of $V_\mu$ is
		\begin{equation*}
			\chi_\mu (\Lambda) = s_\mu (\lambda_1, \ldots, \lambda_d),
		\end{equation*}
		where $s_\mu$ is the Schur polynomial, and $\set{\lambda_1, \ldots, \lambda_d}$ are the eigenvalues of $\Lambda$. 

		If $\mu_d \neq 0$, then the character of $V_\mu$ is
		\begin{equation*}
			\chi_\mu (\Lambda) = \det(\Lambda)^{\mu_d} \chi_{\mu'}(\Lambda),
		\end{equation*}
		where $\mu' = (\mu_1 - \mu_d, \ldots, \mu_{d-1} - \mu_d, 0)$.
	\end{theorem}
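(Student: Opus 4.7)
The plan is to reduce the statement to the Weyl character formula for $\UG(d)$ and then handle the twist by $\det^{\mu_d}$ in the non-polynomial case.

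First, since $\chi_\mu$ is a class function on $\UG(d)$ and every $\Lambda \in \UG(d)$ is conjugate to a diagonal matrix $\diag(\lambda_1, \ldots, \lambda_d)$ with $\abs{\lambda_i} = 1$, it suffices to evaluate $\chi_\mu$ on diagonal matrices. Restricted to the maximal torus $T \subset \UG(d)$, the character $\chi_\mu$ is a Laurent polynomial in $\lambda_1, \ldots, \lambda_d$ that is symmetric under the Weyl group $S_d$ (acting by permuting the $\lambda_i$'s). This reduces the problem to identifying this symmetric Laurent polynomial.

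Second, I would invoke the Weyl character formula, which asserts that the character of the irreducible representation of $\UG(d)$ with highest weight $\mu$, when $\mu_d \geq 0$, equals
\begin{equation*}
\chi_\mu(\Lambda) = \frac{\det\bigl(\lambda_i^{\mu_j + d - j}\bigr)_{1 \leq i, j \leq d}}{\det\bigl(\lambda_i^{d-j}\bigr)_{1 \leq i, j \leq d}}.
\end{equation*}
This is precisely the bialternant definition of the Schur polynomial $s_\mu(\lambda_1, \ldots, \lambda_d)$, which proves the first case $\mu_d = 0$ (and in fact holds for any $\mu_d \geq 0$ polynomial representation). This step relies on standard highest-weight theory for compact Lie groups: once one has the Weyl integration formula and orthogonality of characters, the bialternant is the unique symmetric Laurent polynomial of the correct weight whose restriction to the positive Weyl chamber has the prescribed highest term.

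Third, for $\mu$ with $\mu_d \neq 0$ (possibly negative, since we are dealing with the full unitary group rather than only polynomial representations), I would observe that the one-dimensional representation $\det$ has character $\lambda_1 \cdots \lambda_d$, and tensoring with $\det^{\mu_d}$ shifts highest weights by $(\mu_d, \ldots, \mu_d)$. Hence the irreducible representation with highest weight $\mu$ is isomorphic to $V_{\mu'} \otimes \det^{\mu_d}$, where $\mu' = (\mu_1 - \mu_d, \ldots, \mu_{d-1} - \mu_d, 0)$ has last entry zero. Taking characters gives $\chi_\mu(\Lambda) = \det(\Lambda)^{\mu_d} \chi_{\mu'}(\Lambda)$, which is the second formula.

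The only nontrivial step is the Weyl character formula itself, but since the theorem is quoted from Roy--Scott and Bump, I would simply cite it rather than re-derive it. The rest is bookkeeping: checking that the determinant twist indeed matches the stated shift of indices and that the $\mu_d = 0$ case of the Weyl formula coincides with the combinatorial definition of the Schur polynomial.
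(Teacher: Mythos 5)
Your argument is correct and is essentially the standard derivation found in the cited sources (the paper itself states this theorem with a citation to Roy--Scott and Bump and gives no proof): the Weyl character formula in bialternant form identifies $\chi_\mu$ with the Schur polynomial when $\mu_d \geq 0$, and tensoring with $\det^{\mu_d}$ reduces the general rational case to that one. Nothing is missing; the reduction to the maximal torus, the identification with the bialternant, and the determinant twist are exactly the steps carried out in \cite[Theorem 38.2 and Proposition 38.2]{MR2062813}.
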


	For numerical computation, it takes considerable time to find the eigenvalues of a matrix and meantime it loses accuracy. 
	Therefore we prefer to express $\chi_\mu$ by $\Tr(\Lambda^k)$ and $\cplxconj{\Tr({\Lambda}^k)}$ where $0 \leq k \leq d$.
	This can be done by Newton--Girard formulae \cite[\S 10.12, pp. 278-279]{MR1740388}.

	\begin{example}
		By \cref{thm:inhibitor}, we have
		\begin{align*}
			& \chi_{(3,0,-3)}(\Lambda)  = \det(\Lambda)^{-3} s_{(6,3,0)}(\lambda_1,\lambda_2,\lambda_3) \\
			& = \frac{1}{(\lambda_1 \lambda_2 \lambda_3)^3} \bigg(
			-2 \lambda_1 \lambda_2 \lambda_3 \left(\lambda_1 \lambda_2+\lambda_3 \lambda_2+\lambda_1 \lambda_3\right) \left(\lambda_1+\lambda_2+\lambda_3\right){}^4 \\
			& +\left(\lambda_1 \lambda_2+\lambda_3 \lambda_2+\lambda_1 \lambda_3\right){}^3 \left(\lambda_1+\lambda_2+\lambda_3\right){}^3+2 \lambda_1^2 \lambda_2^2 \lambda_3^2 \left(\lambda_1+\lambda_2+\lambda_3\right){}^3 \\
			& +3 \lambda_1 \lambda_2 \lambda_3 \left(\lambda_1 \lambda_2+\lambda_3 \lambda_2+\lambda_1 \lambda_3\right){}^2 \left(\lambda_1+\lambda_2+\lambda_3\right){}^2-2 \left(\lambda_1 \lambda_2+\lambda_3 \lambda_2+\lambda_1 \lambda_3\right){}^4 \left(\lambda_1+\lambda_2+\lambda_3\right) \\ 
			& -5 \lambda_1^2 \lambda_2^2 \lambda_3^2 \left(\lambda_1 \lambda_2+\lambda_3 \lambda_2+\lambda_1 \lambda_3\right) \left(\lambda_1+\lambda_2+\lambda_3\right)+\lambda_1^3 \lambda_2^3 \lambda_3^3+2 \lambda_1 \lambda_2 \lambda_3 \left(\lambda_1 \lambda_2+\lambda_3 \lambda_2+\lambda_1 \lambda_3\right){}^3
			\bigg)
		\end{align*}
		Note that $\frac{1}{\lambda_i} = \cplxconj{\lambda_i}$ and $\Tr(\Lambda^k) = \sum_i {\lambda_i^k}$. We can simplify the above expression by Newton-Girard formulae of symmetric polynomials.
		\begin{align} \label{eqn:amphitriaene}
			&\chi_{(3,0,-3)}(\Lambda) = 
				\Tr(\Lambda^2)\cplxconj{\Tr(\Lambda)}^2
				+ \Tr(\Lambda^3)\cplxconj{\Tr(\Lambda) \Tr(\Lambda^2)} 
				+ \Tr(\Lambda)^2 \cplxconj{\Tr(\Lambda^2)}\\ 
				&- 2 \Tr(\Lambda^2) \cplxconj{\Tr(\Lambda^2)} 
				+ \Tr(\Lambda) \Tr(\Lambda^2) \cplxconj{\Tr(\Lambda^3)} 
				- \Tr(\Lambda^3) \cplxconj{\Tr(\Lambda^3)} 
				- 3 \Tr(\Lambda) \cplxconj{\Tr(\Lambda)} 
				+ 10 \nonumber
		\end{align}
	\end{example}

	\begin{example}
		\begin{align} \label{eqn:northwestern}
			&\chi_{(4,0,0,-4)}(\Lambda) = \bigg(
			\abs*{18 \Tr(\Lambda^4) - 12 \Tr(\Lambda) \Tr(\Lambda^3) - 6 \Tr(\Lambda^2)^2 + 4 \Tr(\Lambda^2) \Tr(\Lambda)^2}^2 \nonumber \\ 
			&+ 48 \Re \lrbra*{(2 \Tr(\Lambda) \Tr(\Lambda^3) + \Tr(\Lambda^2)^2) \cplxconj{\Tr(\Lambda^2) \Tr(\Lambda)^2}} \nonumber \\
			&- 16 \abs*{\Tr(\Lambda^2) \Tr(\Lambda)^2}^2 + \abs*{24 \Tr(\Lambda^3) - 27 \Tr(\Lambda^2) \Tr(\Lambda) + 3 \Tr(\Lambda)^3}^2 \\ 
			&- \abs*{3 \Tr(\Lambda)^3 - 27 \Tr(\Lambda) \Tr(\Lambda^2)}^2 + 360 \abs*{\Tr(\Lambda^2)}^2 + 216 \abs*{\Tr(\Lambda)^2}^2 \nonumber \\
			&- 1296 \Re \lrbra*{\cplxconj{\Tr(\Lambda^2)} \Tr(\Lambda)^2} + 432 \abs*{\Tr(\Lambda) \Tr(\Lambda^2)}^2 - 720 \abs*{\Tr(\Lambda)}^2 - 5040
			\bigg) / 144 \nonumber
		\end{align}
	\end{example}

\subsection{The approximation algorithm}

	Now our goal is reduced to the following problem.

	\begin{problem}
		Given a continuous real function $f$ defined on a connected Lie group, find a zero of this function (numerically).
	\end{problem}

	In particular, the function $f$ is a non-trivial $G \times G$-invariant polynomial on a unitary group $\UG(d)$. 
	The unitary group $\UG(d)$ is connected, and the existence of zero is guaranteed because the integration of $f$ on $\UG(d)$ is $0$. 

	Suppose $f(L) < 0$ and $f(R) > 0$ where $L, R$ are two matrices representing the elements of the Lie group. 
	By intermediate value theorem, there exists at least one matrix $Z$ on a path connecting $L$ and $R$ such that $f(Z) = 0$. 
	There are infinitely such paths and we will choose some special paths in the following.

	It is natural to use bisection method or false position method to approximate the zero in arbitrary precision. 
	The trouble here is that the function is defined on a manifold rather than the Euclidean space. 
	For Lie groups, there is a canonical atlas given by the exponential map from the Lie algebra to the Lie group. 
	We take advantage of this property to define the mid-point and the false position. 
	The mid-point of $L$ and $R$ is defined to be $\exp \left((\log L + \log R)/2 \right)$, and the false position between $L$ and $R$ is defined to be $\exp \left(\frac{f(R)\log L - f(L)\log R}{f(R)-f(L)} \right)$. 
	The false position method usually converges faster than the bisection method. 
	Nevertheless we use the bisection method when $L$ and $R$ are far away for the sake of robustness.
	One may consider other iterative methods to speed up the convergence. We did not use them because evaluation of the function is the heavy part of the computation. 
	The initial value of $L$ and $R$ are obtained by taking unitary matrices randomly until both of them are found. 

	\begin{algorithm}
	    \caption{FindZeroOnUnitaryGroup}\label{alg:Pocket}
	    \begin{algorithmic}[1]
	        \Function{FindZeroOnUnitaryGroup}{$f,L,R,\epsilon$}
	            \While{$\norm{L-R} > \epsilon$}
	            	\State{$M \gets \exp \left((\log L + \log R)/2 \right)$ or $\exp \left(\frac{f(R)\log L - f(L)\log R}{f(R)-f(L)} \right)$.}
	            \If{$f(M) < 0$}
	            	\State{$L \gets M$}
	            \Else
	            	\State{$R \gets M$}
	            \EndIf
	            \EndWhile 
	            \State{\Return{$M$}}
	        \EndFunction
	    \end{algorithmic}
	\end{algorithm}

	We are ready to construct the unitary designs, but let us put further constraint on the solution for the moment. 

	\begin{problem} \label{prob:septomarginal}
		Find a zero $U_0$ with good property, namely the size of the orbit $GU_0G$ is as small as possible.
	\end{problem}

	Suppose $GU_0G$ is an orbit whose size is smaller than $|G|^2$, then there must exist $g_1,g_2,g_3,g_4 \in G$, such that $g_1^\dagger U_0g_2=g_3^\dagger U_0g_4$. 
	Therefore $U_0^\dagger g_i U_0 = g_j$ where $g_i = g_3 g_1^\dagger $ and $g_j = g_4 g_2^\dagger $ are also elements of $G$. 
	This implies that $g_i$ and $g_j$ have the same spectrum. 
	If $g_i$ has distinct eigenvalues, then $U_0$ is on a submanifold isomorphic to $\UG(1) \times \UG(1) \times \cdots \times \UG(1)$. 
	If the eigenvalues of $g_i$ are not simple, then $U_0$ is on a submanifold isomorphic to $\UG(m_1) \times \UG(m_2) \times \cdots \times \UG(m_k)$, where $m_1, m_2, \ldots, m_k$ are the multiplicities of the eigenvalues.
	Note that there is no guarantee that a zero exists on the submanifold.

	Though it does not solve \cref{prob:septomarginal} completely, we have the clue to find them. 

\subsection{Solutions}

	For $G \cong \SL(3,2) \injto \UG(3)$, we find a zero on the submanifold $\UG(1) \times \UG(1) \times \UG(1)$, namely the diagonal unitary matrices. The size of the orbit is at most $\abs{G}^2/4 = 7056$. 

	\begin{example} \label{eg:cattlegate}
		Let $G \cong \SL(3,2)$ be the matrix group in \cref{eg:floatboard}, and let $f$ be the $G \times G$-invariant polynomial induced by the irreducible character $\chi_{(3,0,-3)}$ in \cref{eqn:amphitriaene}. 
		Then
		$U_0 = \diag(u_{11}, u_{22},u_{33})$
		is a zero of $f$, where $u_{11} = 1$, $u_{22} \approx 0.6480674529649858 - 0.7615829412529393 i$ and $u_{33} \approx -0.3307476956662597 - 0.9437192176762438 i$. 
		The error bound in \cref{alg:Pocket} is $\epsilon = 10^{-15}$.
		Hence the orbit $GU_0G$ is a unitary $3$-design on $\UG(3)$.
		The size of this orbit is $7056$.

	Moreover, we can characterize all the diagonal unitary matrices in $\UG(3)$ which make $G U_0 G$ a unitary $3$-design. 
	Let $u, v, t$ be real numbers and let $U_0 = \diag \left(e^{it}, e^{i(t + \frac{u+v}{2})}, e^{i(t + \frac{u-v}{2})}\right)$. 
	Then $G U_0 G$ is a unitary $3$-design if and only if $u$ and $v$ satisfy
	\begin{align} \label{eqn:brachiocyllosis}
		& \cos (u) [281838 \cos (v) - 156 \cos (2 v)-158]+ \sqrt{7} \sin (u) \left[24  \cos (v)+6  \cos (2 v)+2 \right] \nonumber\\
		& + [28125 -181 \cos (v)+140901 \cos (2 v)-65 \cos (3 v)] = 0.
	\end{align}
The solution of this equation is shown in \cref{fig:Beethovenish}.

\begin{figure}[htbp]
\centering
\includegraphics[scale=0.7]{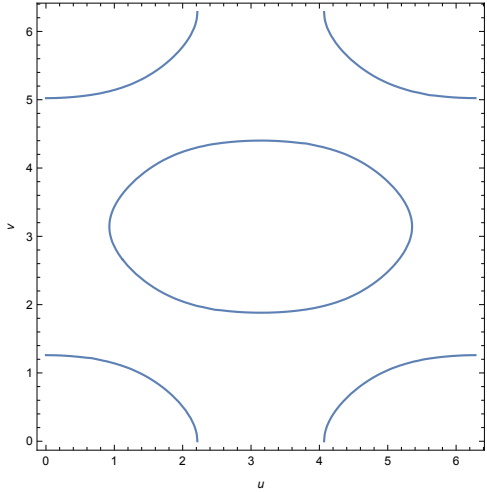}
\caption{Solution of \cref{eqn:brachiocyllosis} for $u,v \in [0,2\pi]$.}   \label{fig:Beethovenish}
\end{figure}

	\end{example}

	For $G \cong \Sp(4,3) \injto \UG(4)$, we find a zero on the submanifold $\UG(2)\times \UG(2)$. The size of the orbit is at most $\abs{G}^2/6 = 447897600$.

	\begin{example} \label{eg:pentahalide}
		Let $G \cong \Sp(4,3)$ be the matrix group in \cref{eg:zoon}, and let $f$ be the $G \times G$-invariant polynomial induced by the irreducible character $\chi_{(4,0,0,-4)}$ in \cref{eqn:northwestern}. 
		Then
		$U_0 = \begin{bmatrix}
			A & 0 \\
			0 & B 
		\end{bmatrix} $
		is a zero of $f$, where $A \approx \begin{bmatrix}
			-0.106632-0.973877 i & 0.0621677\, -0.190601 i \\
			 0.197341\, +0.0353545 i & -0.807683-0.554486 i
		\end{bmatrix}$ and $B \approx \begin{bmatrix}
			-0.596879-0.434093 i & -0.562033-0.373388 i \\
			0.372766\, -0.562445 i & -0.381284+0.631921 i 
		\end{bmatrix}$. 
		The error bound in \cref{alg:Pocket} is $\epsilon = 10^{-6}$.
		Hence the orbit $GU_0G$ is a unitary $4$-design on $\UG(4)$.
		The size of this orbit is $447897600$.
	\end{example}

	\begin{remark}
		The existence of exact $t$-designs are guaranteed by \cref{thm:Khazar}, which is different from finding approximate unitary $t$-designs. 
		\cref{alg:Pocket} can approximate such a unitary design with arbitrary precision if one has enough time and computational resources. 
		The time complexity of evaluating \cref{eqn:trifledom} is $O(tp(t)d^3 \abs{G}^2)$ where $p(n) \sim \frac{1}{4n \sqrt{3}} e^{\pi \sqrt{2n/3}}$, the partition function, is equal to the number of partitions of $n$. 
		The error $\epsilon$ is ideally halved after each iteration. 
		So it takes about $\log_2 10 \approx 3.3$ iterations to get one more significant digit. 
		For \cref{eg:pentahalide}, our program (written in Mathematica) ran on a PC equipped with Core i7-6700 CPU and 8GB RAM, and it took about half a day for each iteration.
	\end{remark}

\section{Discussion}

	It would be interesting to classify those unitary $t$-groups $G\subset U(d)$ that satisfy $(\chi^{t+1},\chi^{t+1})_G=(\chi^{t+1},\chi^{t+1})_{U(d)}+1=(t+1)! +1$, which is the condition of \cref{thm:Khazar}. 
	This should be certainly possible for $t\geq 2$ as such $G$ are among those already classified. 
	The problem would be interesting for $t=1$ as well. 
	We expect the existence of many such examples of unitary 2-designs by our method mentioned in this paper. 
	Such classification may be obtained by extending the method in Guralnick-Tiep \cite{MR2123127}, although actually doing so would not be trivial at all. 
	This would lead to explicit constructions of many families of explicit unitary $2$-designs. 
	We believe this is an independently interesting open problem from the viewpoint of finite group theory. 

	Concerning \cref{eg:cattlegate,eg:pentahalide}, it would be interesting to find what are the smallest sizes of unitary 3-designs, respectively 4-designs, that can be obtained by our method. 
	This may be done by discussing the possible submanifolds which contain the orbit. 
	If the function can achieve zero on a submanifold, we can still apply \cref{alg:Pocket}. 
	On the other hand it is not easy to show the non-existence of zeros on a submanifold. 

\section*{Acknowledgment}
The authors thank TGMRC (Three Gorges Mathematical Research Center) of China Three Gorges University in Yichang, Hubei, China, for supporting the visits of the authors to work on this research project in October 2018. 
The work is supported in part by NSFC Grant 11671258.
The work of M. N. is partly supported by
KAKENHI from JSPS Grant-in-Aid for Scientific Research
(KAKENHI Grant No. 17K05554). Y. Z. is supported by NSFC Grant No. 11801353 and China Postdoctoral Science Foundation No. 2018M632078.

\bibliographystyle{hplain}

\end{document}